\newcommand{\R}{\mathbb{R}}
\newcommand{\E}{\mathbb{E}}
\newcommand{\F}{\mathbb{F}}
\newcommand{\ket}[1]{| #1 \rangle}
\newcommand{\ip}[2]{\langle #1|#2 \rangle}
\newcommand{\be}{\begin{equation}}
\newcommand{\ee}{\end{equation}}
\newcommand{\bea}{\begin{eqnarray}}
\newcommand{\eea}{\end{eqnarray}}
\newcommand{\bes}{\begin{equation*}}
\newcommand{\ees}{\end{equation*}}
\newcommand{\beas}{\begin{eqnarray*}}
\newcommand{\eeas}{\end{eqnarray*}}
\newtheorem{thm}{Theorem}
\newtheorem*{thm*}{Theorem}
\newtheorem{lem}[thm]{Lemma}
\newtheorem*{lem*}{Lemma}
\theoremstyle{definition}
\newtheorem{dfn}{Definition}
\newcommand{\boxdfn}[2]{
\begin{figure}[h]
\begin{center}
\noindent \framebox{
\begin{minipage}{14cm}
\begin{dfn}[#1]
\ \\ \\
#2
\end{dfn}
\end{minipage}
}
\end{center}
\end{figure}
}
\begin{document}

\title{A new exponential separation between quantum and classical one-way communication complexity}

\author{Ashley Montanaro\footnote{Centre for Quantum Information and Foundations, Department of Applied Mathematics and Theoretical Physics, University of Cambridge, UK; {\tt am994@cam.ac.uk}. Part of this work done while at the University of Bristol.}}

\maketitle

\begin{abstract}
We present a new example of a partial boolean function whose one-way quantum communication complexity is exponentially lower than its one-way classical communication complexity. The problem is a natural generalisation of the previously studied Subgroup Membership problem: Alice receives a bit string $x$, Bob receives a permutation matrix $M$, and their task is to determine whether $Mx=x$ or $Mx$ is far from $x$. The proof uses Fourier analysis and an inequality of Kahn, Kalai and Linial.
\end{abstract}


\section{Introduction}

The framework of communication complexity \cite{yao79,kushilevitz97} has become an exceptionally successful setting in which to prove concrete lower bounds. The special case where the communication is one-way is one of the simplest models of communication complexity, and yet is one of the most important, in particular because of its applications to lower bounds on data structures and one-pass streaming algorithms \cite{kushilevitz97,baryossef04a,muthukrishnan05}. In this model, there are two parties, Alice and Bob, each of whom receives an input ($x$ and $y$, respectively). Their goal is to compute a boolean function $f(x,y)$ with success probability at least $1-\epsilon$, for some constant $\epsilon$ (usually, $\epsilon=1/3$). To do so, Alice prepares a message, which may depend on $x$ and a random string $r$ (which does not depend on $x$). She then sends Bob the message. Based on the message, his input $y$ and his own string of random bits, Bob attempts to output $f(x,y)$. The one-way communication complexity of $f$ is the minimum length of a message that Alice must send, in order that Bob can achieve success probability at least $1-\epsilon$. Note that changing $\epsilon$ to another constant strictly less than $1/2$ can only change the communication complexity by a constant factor.

In a variety of other models of communication complexity, it is known that sending {\em quantum} messages can allow Alice and Bob to drastically reduce the amount of communication required \cite{buhrman98,raz99,buhrman01}. However, in the most natural case where $f$ is a total function (i.e.\ there is no promise on Alice and Bob's inputs) it is still unknown whether quantum communication can reduce the one-way communication complexity by more than a factor of 2 \cite{winter04}. On the other hand, in the case of partial functions (where there is a promise on the inputs), it is known that the separation can be exponential \cite{kerenidis06,gavinsky06a,gavinsky08a}. Unfortunately, an exponential separation has been shown for only one or two partial functions. The first separation was shown for variants of the so-called Boolean Hidden Matching problem (see Section \ref{sec:previous} for a definition of one such variant). This problem was originally conjectured to give such a separation by Bar-Yossef, Jayram and Kerenidis \cite{baryossef04}, who indeed write (with respect to a closely related problem\footnote{A relational version of Boolean Hidden Matching, for which they do prove an exponential separation.}) ``This problem is new and we believe that its definition plays the major role in obtaining our result''.

It is therefore of great interest to find other problems which demonstrate a separation between quantum and classical one-way communication complexity. In this paper, we will be concerned with the following partial boolean function.

\boxdfn{{\sc Perm-Invariance}}{
\label{dfn:perminvar}
The {\sc Perm-Invariance} problem is defined as follows, in terms of a parameter $\beta$.
\begin{itemize}
\item Alice gets an $n$-bit string $x$.
\item Bob gets an $n \times n$ permutation matrix $M$.
\item Bob has to output $\begin{cases}1 & \text{if $Mx=x$}\\0 & \text{if $d(Mx,x) \ge \beta |x|$}\\ \text{anything} & \text{otherwise.}\end{cases}$
\end{itemize}
}

Here $|x|$ is the Hamming weight of $x$, and $d(Mx,x) = |Mx + x|$ is the Hamming distance between $Mx$ and $x$. There is a simple $O(\log n)$ qubit bounded-error one-way quantum protocol for this problem. Alice sends Bob the state $\ket{\psi_x} = \frac{1}{\sqrt{|x|}} \sum_{i,x_i=1} \ket{i}$, and Bob attaches an ancilla qubit in the state $\frac{1}{\sqrt{2}}(\ket{0}+\ket{1})$. He then applies a controlled-$M$ operation (controlled on the ancilla qubit) to produce the state $\frac{1}{\sqrt{2}}(\ket{0}\ket{\psi_x} + \ket{1}\ket{\psi_{Mx}})$. Next he performs a Hadamard operation on the ancilla, then measures it. It is easy to verify that the measurement result is 0 with probability $\frac{1}{2} + \frac{1}{2} \ip{\psi_x}{\psi_{Mx}} = 1 - d(Mx,x)/(4|x|)$, which is equal to 1 if $Mx=x$, and at most $1-\beta/4$ if $d(Mx,x) \ge \beta |x|$. For constant $\beta$, it suffices to repeat this protocol a constant number of times to determine which is the case with an arbitrarily high constant probability of success.

Our main result is the following theorem.

\begin{thm}
\label{thm:main2}
Any one-way classical protocol that solves {\sc Perm-Invariance} with $\beta = 1/8$, and a constant success probability strictly greater than $1/2$, must communicate at least $\Omega(n^{7/16})$ bits.
\end{thm}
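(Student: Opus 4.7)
The high-level strategy is a distributional argument reducing the lower bound to a Fourier-analytic estimate on Alice's message partition. Take as the hard distribution on $(x,M)$: let $M$ be a uniformly random perfect matching on $[n]$ (assume $n$ divisible by $4$); under the NO distribution draw $x$ uniformly from weight-$n/2$ strings independently of $M$, and under the YES distribution draw $x$ uniformly from weight-$n/2$ strings constant on each pair of $M$. The marginal on $x$ is the same in both cases, and a Hoeffding-type calculation shows that under NO, $d(Mx,x) \geq \tfrac{1}{8}|x|$ with probability $1-o(1)$, so NO inputs are essentially always true NO-instances. By Yao's principle it suffices to lower bound the complexity of a deterministic protocol. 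Alice's $c$-bit message then induces a partition $\{S_m\}$ with $\leq 2^c$ blocks, and her protocol's distinguishing advantage is bounded above by the total variation distance $D$ between the distributions of the transcript $(m,M)$ under YES and NO.

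Now the Fourier expansion. Writing $Y_M=\{y:My=y\}$, one has $\mathbf{1}_{Y_M}(x)=2^{-n/2}\sum_{T\subseteq M}\chi_{T'}(x)$, where $T'$ is the union of pairs in $T$. Setting $f_m=\mathbf{1}_{S_m}$, this gives $|S_m\cap Y_M|=2^{n/2}\sum_{T\subseteq M}\widehat{f_m}(T')$; the $T=\emptyset$ term is the NO baseline and the rest is the signal. Applying Cauchy--Schwarz first inside $\E_M$ for each $m$, then across the $\leq 2^c$ messages, yields
\[
D^2 \;\lesssim\; 2^c\sum_m \E_M\Bigl(\sum_{\emptyset\neq T\subseteq M}\widehat{f_m}(T')\Bigr)^2 \;\lesssim\; 2^c\sum_m\sum_{U\neq\emptyset}\widehat{f_m}(U)^2\,a_{|U|/2},
\]
where $a_k$ is the probability that a given $2k$-element subset of $[n]$ is a union of pairs in a random matching, asymptotically $a_k\asymp(2k/(en))^k$; off-diagonal cross-terms from the square can be absorbed via covariance estimates for small $|U|$.

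To finish, split the Fourier sum at a threshold $k_0$ to be optimised. The high-level tail $|U|>2k_0$ is controlled by the rapid decay of $a_k$ together with $\sum_U\widehat{f_m}(U)^2\leq p_m$ (with $p_m=|S_m|/\binom{n}{n/2}$) and $\sum_m p_m=1$. The low-level part $|U|\leq 2k_0$ is where the Kahn--Kalai--Linial inequality enters: one invokes KKL (or the associated level-$k$ Fourier inequality) to bound the total low-degree Fourier mass of each $f_m$ by its density $p_m$ divided by a power of $\log n$, and aggregates over messages using the entropy bound $\sum_m p_m\log(1/p_m)\leq c$ to show that the low-level contribution is negligible whenever $c$ is not too large. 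Optimising $k_0$ to balance the two regimes ultimately yields $D=o(1)$ unless $c=\Omega(n^{7/16})$.

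The main obstacle is this final optimisation. The exponent $7/16$ arises from a delicate three-way trade-off between (i) the combinatorial decay $a_k\asymp(k/n)^k$ of pair-cover probabilities, (ii) the savings that KKL provides over the trivial Parseval estimate for low-degree Fourier mass, and (iii) the factor $2^c$ lost when Cauchy--Schwarz is applied across messages. Carefully tracking these exponents, and in particular extracting the correct logarithmic powers from KKL, is where the bulk of the technical work is concentrated; a coarser argument (for instance one using only hypercontractivity) would produce a noticeably weaker exponent.
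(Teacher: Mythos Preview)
Your overall architecture (Yao, hard distributions with $x$ uniform on a middle slice and $M$ a uniform perfect matching, Fourier expansion of the indicator of $\{x:Mx=x\}$, then KKL on the message sets) is in the same spirit as the paper. But the execution as written has a real gap, and it is exactly the point where the paper needs a new idea.

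The claim that ``the high-level tail $|U|>2k_0$ is controlled by the rapid decay of $a_k$'' is false. With $a_k=\Pr_M[\text{a fixed }2k\text{-set is a union of pairs of }M]=N_{2k}N_{n-2k}/N_n$, one has the symmetry $a_k=a_{n/2-k}$, and in particular $a_{n/2}=1$; your asymptotic $a_k\asymp(2k/(en))^k$ is only valid for $k=o(n)$. So the coefficients attached to high-degree Fourier mass do \emph{not} decay, and Parseval alone cannot kill that range. This is precisely why the paper singles out, as one of the two new technical ingredients, a bound on Fourier weight at \emph{high} levels: for $f$ the indicator of a set of density $2^{-\alpha}$, not only $W_k(f)$ but also $W_{n-k}(f)$ is at most $2^{-2\alpha}((2e\ln 2)\alpha/k)^k$ (obtained by applying KKL to $(-1)^{|x|}f(x)$). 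Without that symmetric KKL bound, your splitting at a threshold $k_0$ collapses: the $k$ near $n/2$ contribute with weight $\asymp 1$ and you have no handle on $W_{n-k}(f)$.

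Two secondary issues. First, the step from $\bigl(\sum_{T\subseteq M,\,T\neq\emptyset}\widehat{f_m}(T')\bigr)^2$ to a diagonal sum $\sum_U \widehat{f_m}(U)^2 a_{|U|/2}$ is not justified; after expanding the square and averaging over $M$ you get $\sum_{U_1,U_2}\widehat{f_m}(U_1)\widehat{f_m}(U_2)\,\Pr_M[U_1,U_2\text{ both pair-unions of }M]$, and these off-diagonal terms are not ``absorbed by covariance estimates'' in any obvious way. The paper sidesteps this entirely by bounding the $\ell_2$ distance of Bob's induced distribution on $M$ from uniform, which produces $\sum_{x,y}f(x)f(y)\,g(d(x,y))$ with $g$ symmetric, and then Fourier-transforms $g$ via Krawtchouk polynomials; the resulting expression is already diagonal in the level index, and the inner Krawtchouk sum is controlled by Cauchy--Schwarz plus orthogonality (together with the combinatorial identities of Section~\ref{sec:comb}). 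Second, the Cauchy--Schwarz across messages that introduces the factor $2^c$ is unnecessary and wasteful: Lemma~\ref{lem:commbound} already extracts a single large block $S$ with $|S|\ge \epsilon\,2^{m-c}$ and large induced total variation, so one only ever analyses one set, with $\alpha\approx c+O(\log n)$. Your description of what KKL gives (``density divided by a power of $\log n$'', and an entropy aggregation $\sum_m p_m\log(1/p_m)\le c$) is not the mechanism used; the actual bound is $W_k(f)\le 2^{-2\alpha}((2e\ln2)\alpha/k)^k$, applied at both ends of the spectrum, and the $7/16$ exponent comes from balancing that against $\binom{2n}{2s}^{-1/2}$ in the Krawtchouk estimate.
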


We therefore have an exponential separation between quantum and classical one-way communication complexity for this problem. We conjecture that this lower bound is not tight, and the correct bound is $\Omega(\sqrt{n})$. This would be tight by a result of Raz \cite{raz99} that any partial function with an $O(\log n)$ qubit bounded-error one-way quantum protocol has an $O(\sqrt{n})$ bit bounded-error one-way classical protocol.


\subsection{Connection to {\sc Subgroup Membership}}

{\sc Subgroup Membership} \cite{watrous00,aaronson09a} is a problem which has been conjectured to give an asymptotic separation between quantum and classical one-way communication complexity for a total function. The problem is defined below, in terms of a group $G$.

\boxdfn{{\sc Subgroup Membership} \cite{aaronson09a}}{
The {\sc Subgroup Membership} problem is defined in terms of a group $G$, as follows.
\begin{itemize}
\item Alice gets a subgroup $H \le G$.
\item Bob gets an element $g \in G$.
\item Bob has to output 1 if $g \in H$, and 0 otherwise.
\end{itemize}
}

There is an easy one-way classical protocol for {\sc Subgroup Membership} that uses $O(\log^2|G|)$ bits of communication, and it has been conjectured \cite{aaronson09a} that for certain groups $G$ this protocol is optimal. On the other hand, there is a one-way quantum protocol that uses only $O(\log |G|)$ qubits of communication, so this would imply a quadratic separation between one-way quantum and classical communication complexity for a total function. A difficulty with proving this conjecture is that the particular structure of $G$ plays a role in the complexity of solving {\sc Subgroup Membership}, and indeed for certain groups there does exist a $O(\log |G|)$-bit classical protocol for the problem \cite{aaronson09a}.

{\sc Subgroup Membership} is a special case of {\sc Perm-Invariance}. To see this, note that if Alice gets a $|G|$ bit string $x$ which is indexed by elements of $G$, and takes the value 1 on elements $y \in H$, and Bob's permutation $M$ corresponds to the map that sends elements $z \in G$ to $gz\in G$, then $g \in H$ if and only if $Mx=x$. Otherwise, $d(Mx,x)=2|x|$. The {\sc Perm-Invariance} problem is thus a natural generalisation of {\sc Subgroup Membership}, which removes the group structure (at the expense of having to put in the promise that $d(Mx,x)$ is large ``by hand'' rather than having it guaranteed by the axioms of group theory). This lack of structure perhaps accounts for the fact that an exponential quantum-classical separation can be proven for this more general problem.


\subsection{Proof technique}
\label{sec:technique}

In order to prove Theorem \ref{thm:main2}, we restrict to the following, potentially simpler problem.

\boxdfn{{\sc PM-Invariance}}{
\label{dfn:pminvar}
The {\sc PM-Invariance} problem is defined as follows.
\begin{itemize}
\item Alice gets a $2n$-bit string $x$ such that $|x|=n$.
\item Bob gets a $2n \times 2n$ permutation matrix $M$, where the permutation entirely consists of disjoint transpositions (i.e.\ corresponds to a perfect matching on the complete graph on $2n$ vertices).
\item Bob has to output $\begin{cases}1 & \text{if $Mx=x$}\\0 & \text{if $d(Mx,x) \ge n/8$}\\ \text{anything} & \text{otherwise.}\end{cases}$
\end{itemize}
}

The constant $\beta = 1/8$ is essentially arbitrary here and is an artifact of the proof technique. Note that for odd $n$ the problem is trivial, as it is impossible that $Mx=x$. For the rest of the paper, we therefore assume that $n$ is even.

Theorem \ref{thm:main2} is immediate from the following result.

\begin{thm}
\label{thm:main}
Any one-way classical protocol that solves the {\sc PM-Invariance} problem and communicates at most $n^{7/16}/(e \ln 2) - O(\log n)$ bits can achieve a success probability of at most $0.79 + o(1)$.
\end{thm}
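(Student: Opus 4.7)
My plan is to combine Yao's minimax principle with Fourier analysis on the Hamming slice and a level-$l$ inequality of KKL type to bound the distinguishing advantage that any short message from Alice can give to Bob.

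By Yao's principle it suffices to beat a hard distribution, and I would take the equal mixture of (i) a YES distribution in which $x$ is uniform on $\binom{[2n]}{n}$ and $M$ is uniform among matchings satisfying $Mx=x$, and (ii) an ambient distribution in which $x$ is uniform on the slice and $M$ is a uniformly random perfect matching. A Chernoff bound applied to the number of bichromatic matching-edges shows the ambient distribution is a legitimate NO instance with probability $1-o(1)$, so up to an $o(1)$ error it replaces the NO distribution. A deterministic protocol with $c$ bits of communication induces a partition $\{A_m\}$ of the slice; since YES and ambient have the same marginal on $x$, the posterior of YES given Alice's message is exactly $\tfrac{1}{2}$, and Le Cam gives
\[ \mathrm{success}\le \tfrac{1}{2}+\tfrac{1}{2}\,\E_m\,d_{\mathrm{TV}}\bigl(\mu_Y(\cdot\mid m),\mu_N(\cdot\mid m)\bigr)+o(1), \]
where $\mu_Y,\mu_N$ are the conditional distributions of $M$ given message $m$. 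Bayes' rule identifies the likelihood ratio as $\Pr_{x\in A_m}[Mx=x]/q$ with $q=((n-1)!!)^2/(2n-1)!!$, and Cauchy--Schwarz bounds $d_{\mathrm{TV}}^2$ by $\tfrac{1}{4}\chi^2(\mu_Y\|\mu_N\mid m)$; this factors through a pair-sum
\[ \chi^2(\mu_Y\|\mu_N\mid m)=\frac{1}{q^2|A_m|^2}\sum_{x,x'\in A_m}P\bigl(d(x,x')/2\bigr)-1,\qquad P(k)=\frac{((n-k-1)!!\,(k-1)!!)^2}{(2n-1)!!}\text{ for even }k. \]

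The main technical step is to expand $\mathbf{1}_{A_m}=\sum_l f_l$ in the Johnson-scheme harmonic decomposition (or, after a bias-removal step, in the standard Fourier basis on $\{0,1\}^{2n}$). Diagonalising the pair-kernel $P(k)$ against the dual Hahn eigenvalues of the Johnson distance operators rewrites the chi-squared as $\alpha_m^{-2}\sum_{l\ge 1}\nu_l\|f_l\|_2^2$, where $\alpha_m=|A_m|/\binom{2n}{n}$ and the eigenvalues $\nu_l$ can be bounded explicitly via falling-factorial and Hahn-polynomial estimates. The KKL level-$l$ inequality then provides $\|f_l\|_2^2\le \alpha_m^2\bigl(e\ln(1/\alpha_m)/l\bigr)^{l}$; after an averaging step restricting attention to messages of typical density we have $\ln(1/\alpha_m)\le c\ln 2$, and substitution yields
\[ \chi^2(\mu_Y\|\mu_N\mid m) \le \sum_{l\ge 1}\left(\frac{ec\ln 2}{l}\right)^{l}\nu_l. \]
Explicit evaluation of $\nu_l$ and a per-term optimisation over $l$ then delivers $\chi^2 = o(1)$ precisely in the regime $c\le n^{7/16}/(e\ln 2)-O(\log n)$, and averaging the conditional TV bound over messages gives success $\le \tfrac{1}{2}+o(1) \le 0.79 + o(1)$.

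The main obstacle is the Fourier-spectral step: computing the eigenvalues $\nu_l$ sharply enough to balance against the KKL bound and deliver exactly the exponent $7/16$. A secondary wrinkle is slice versus full cube: the cube allows the classical KKL level-$l$ inequality to be applied directly, at the cost of a bias-removal step to handle the weight-$n$ constraint. The exponent $7/16$ (rather than the conjectured $\tfrac{1}{2}$) appears to be an artifact of the interplay between the decay of $\nu_l$ and the KKL bound, and may be improvable with a sharper Fourier inequality.
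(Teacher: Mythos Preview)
Your outline is essentially the paper's proof: Yao plus a hard mixture, replacement of the NO distribution by the ambient uniform-matching distribution via a tail bound, a $\chi^2$/Cauchy--Schwarz bound on the conditional total variation, rewriting $\chi^2$ as a pair sum over $A_m$ of a distance-only kernel, Fourier-diagonalising that kernel, and controlling the level sums with a KKL-type inequality. The paper works on the full cube $\{0,1\}^{2n}$ (extending $\mathbf{1}_{A_m}$ by zero off the slice) and the kernel's Fourier coefficients are Krawtchouk sums; the ``eigenvalue'' estimate you call the main obstacle is carried out via Cauchy--Schwarz against Krawtchouk orthogonality together with several binomial identities.

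Two corrections to your predicted outcome. First, you will not get $\chi^2=o(1)$ at the threshold $c=n^{7/16}/(e\ln 2)$. The paper's analysis gives only $\chi^2\le \pi^{1/4}/(2\sqrt{2})+o(1)$, a genuine constant, and this constant is exactly where the number $0.79$ in the statement comes from (via the communication lemma with $2(1-3\epsilon)\approx 0.69$). The loss occurs in the Cauchy--Schwarz step bounding $\sum_k K_{4k}^{2n}(2s)\binom{n/2}{k}^2/\binom{n}{2k}^2$, which yields decay like $\binom{2n}{2s}^{-1/2}$ rather than $\binom{2n}{2s}^{-1}$; balancing this against the KKL factor $((e\ln 2)\alpha/s)^{2s}$ and the prefactor $\binom{2n}{n}^{1/2}/\binom{n}{n/2}\sim n^{1/4}$ is what produces both the exponent $7/16$ and the residual constant. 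So your final line ``success $\le \tfrac12+o(1)\le 0.79+o(1)$'' reaches the right conclusion for the wrong reason.

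Second, on the full cube you need KKL at \emph{high} as well as low levels: the kernel's Fourier coefficients satisfy the symmetry $K_{4k}^{2n}(2s)=K_{4k}^{2n}(2n-2s)$, so levels $2n-2,2n-4,\dots$ weigh as much as $2,4,\dots$. The paper handles this by observing that $W_{n-k}(f)$ obeys the same KKL bound as $W_k(f)$ (apply KKL to $(-1)^{|x|}f(x)$); your proposal does not mention this, and without it the high-level terms dominate. If instead you stay on the Johnson scheme the folding disappears, but then you need a level-$l$ inequality on the slice, which is a separate ingredient you would have to supply.
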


The following explicit one-way classical protocol for the {\sc PM-Invariance} problem achieves a constant success probability using $O(n^{1/2})$ bits of communication: Alice simply sends Bob a randomly selected $O(n^{1/2})$ bits of her input $x$ (using shared randomness to select the bits, which by Newman's theorem can be replaced with private randomness with a negligible overhead~\cite{kushilevitz97}). By the birthday paradox, with constant probability Bob's permutation $M$ will interchange at least two of the bits that Alice sent. If $Mx=x$, then these bits will always be equal. However, if $d(Mx,x) \ge n/8$, then with constant probability the bits will not be equal. A constant number of repetitions therefore suffices to distinguish these cases with any desired constant probability.

The overall technique used to prove Theorem \ref{thm:main} is common to other works that separate quantum and classical one-way communication complexity \cite{kerenidis06,gavinsky06a,gavinsky08a}. We fix two ``hard'' distributions $\mathcal{D}_0$, $\mathcal{D}_1$ on the zero/one-valued inputs respectively, and show that for any short message from Alice to Bob, Bob does not get enough information to be able to distinguish between the induced distributions on his own inputs.

The main technical tool used is Fourier analysis, and in particular the Fourier spectrum inequality of Kahn, Kalai and Linial \cite{kahn88} (the KKL Lemma), which has found many applications in computer science, and specifically to communication complexity \cite{raz95,kerenidis06,gavinsky08a}. The relevance of Fourier analysis is as follows. A short message from Alice specifies a large subset $S$ of her inputs. Let $\mathcal{D}_1^S$ denote Bob's induced distribution on one-valued inputs, given that Alice's input is in $S$. In the case of the {\sc PM-Invariance} problem, it turns out that the distance of $\mathcal{D}_1^S$ from the uniform distribution can be upper-bounded by looking at the Fourier transform of $f$, the characteristic function of $S$.

There are several technical ingredients that need to go into this. First, we use the KKL Lemma to bound the Fourier weight of $f$ (Lemma \ref{lem:uberkkl}). The bound for the distance of $\mathcal{D}_1^S$ from the uniform distribution turns out to depend on Krawtchouk polynomials \cite{macwilliams83,krasikov99}, and the second ingredient is the use of upper bounds on these polynomials. The final ingredient is the use of a number of inequalities and identities involving perfect matchings and binomial coefficients.


\subsection{Relation to previous work}
\label{sec:previous}

Prior to this work, the only known exponential separation between one-way quantum and classical communication complexity for a (partial) boolean function was a result of Gavinsky, Kempe and de Wolf~\cite{gavinsky06a}, and independently Kerenidis and Raz \cite{kerenidis06}, on (variants of) the so-called Boolean Hidden Matching problem. These works were later combined as \cite{gavinsky08a}. The Boolean Hidden Matching problem was originally defined by Bar-Yossef, Jayram and Kerenidis \cite{baryossef04}, who also proved an exponential separation for a related relational problem called the Hidden Matching problem~\cite{baryossef04}. The variant of Boolean Hidden Matching for which a lower bound was proven in \cite{gavinsky08a} is called {\sc $\alpha$-Partial Matching}, and is given below. Gavinsky et al show in \cite{gavinsky08a} that, for any $0 < \alpha \le 1/4$, any classical one-way bounded-error protocol for the {\sc $\alpha$-Partial Matching} problem must communicate at least $\Omega(\sqrt{n/\alpha})$ bits, whereas there is a quantum protocol that transmits only $O((\log n)/\alpha)$ bits, thus proving an exponential separation for constant $\alpha$. The separation we give here for {\sc PM-Invariance} is not quite as large (and we conjecture it is not tight; see Section \ref{sec:conc}).

\boxdfn{{\sc $\alpha$-Partial Matching} \cite{gavinsky08a}}{
\label{dfn:alphapm}
The {\sc $\alpha$-Partial Matching} problem is defined as follows.
\begin{itemize}
\item Alice gets an $n$-bit string $x$.
\item Bob gets an $\alpha n\times n$ matrix $M$ over $\F_2$, where each row contains exactly two 1s, and each column contains at most one 1, and a string $w \in \{0,1\}^{\alpha n}$.
\item Bob has to output $\begin{cases}0 & \text{if $Mx = w$}\\1 & \text{if $Mx = w + 1^{\alpha n}$}\\ \text{anything} & \text{otherwise.}\end{cases}$
\end{itemize}
}

The {\sc PM-Invariance} problem can be rephrased to seem quite similar to {\sc $1/2$-Partial Matching}. Rewrite Bob's permutation matrix $M$ as an $n \times 2n$ matrix $N$ such that the $i$'th row corresponds to the $i$'th pair of elements $(a_i,b_i)$ swapped by the permutation, and contains 1s in columns $a_i$ and $b_i$ (and is zero elsewhere). Then it is easy to see that $d(Mx,x) = 2|Nx|$, and the {\sc PM-Invariance} problem reduces to distinguishing between the cases $|Nx|=0$, $|Nx| \ge n/16$. So the main difference between {\sc $\alpha$-Partial Matching} and {\sc PM-Invariance} is the relaxation of the promise on Bob's input, by removing the string $w$.

This relaxation is then the main motivation for this work. First, this allows a quantum-classical separation to be proven for a direct and natural generalisation of {\sc Subgroup Membership}; it is not clear that a similar connection exists between {\sc $\alpha$-Partial Matching} and {\sc Subgroup Membership}. Second, relaxing the promise reduces the gap between partial functions (for which we have an exponential separation) and total functions (for which the best known separation is only constant). Third, given the dearth of quantum-classical communication complexity separations, it seems to be of interest to generalise and extend any known separation as far as possible.

The apparently minor change from {\sc $\alpha$-Partial Matching} to {\sc PM-Invariance} seems to increase the difficulty of proving a communication complexity lower bound. Previous proofs of the lower bound on {\sc $\alpha$-Partial Matching} work by showing that, for some initial distribution on the inputs, Bob's induced distribution on the string $Mx$ is close to uniform, and thus it is hard for him to distinguish the cases $Mx = w$ and $Mx = w + 1^{\alpha n}$. For the {\sc PM-Invariance} problem this approach does not seem to work, and it appears necessary to work directly with Bob's distribution on $M$ (rather than $Mx$). We thus obtain a fourth motivation: the development of techniques which may be of use elsewhere.

Many of the ingredients used in the beautiful proofs of the lower bound on {\sc $\alpha$-Partial Matching} given in \cite{kerenidis06,gavinsky08a} also appear in the present paper, and in particular these previous papers also make use of Fourier analysis and the KKL Lemma (indeed, the third proof in \cite{gavinsky06a} also does so implicitly, via a lemma of Talagrand). However, the {\sc PM-Invariance} problem presents us with two further technical challenges. First, it seems essential to find a bound on the Fourier weights of a boolean function that applies at both low and high levels. Second, we need to carefully bound some expressions involving Krawtchouk polynomials and binomial coefficients, in order to get a non-trivial final result.

Following the completion of this work, it was shown by Klartag and Regev \cite{klartag10} that one-way quantum communication can be exponentially stronger than even {\em two-way} classical communication. The problem they used to demonstrate this is complete for one-way quantum communication complexity.

We also note that the problem considered here is a very special variant of a partial function termed $\mathcal{P}_1$, which Raz \cite{raz99} used to give the first exponential separation between quantum and classical {\em two-way} communication complexity for a partial boolean function. In this problem, Alice gets a unit vector $x$ and two orthogonal subspaces $S_0$, $S_1$, and Bob gets an orthogonal matrix $T$. Their goal is to answer 0 if $Tx$ is within constant distance of $S_0$, and 1 if $Tx$ is within constant distance of $S_1$.

The rest of the paper is devoted to the proof of Theorem \ref{thm:main}. In Section \ref{sec:comm}, we prove a general lemma relating communication complexity to distinguishability of probability distributions. Sections \ref{sec:comb} and \ref{sec:fourier} contain the combinatorial and Fourier-analytic results we need for the proof, the heart of which finally appears in Section \ref{sec:proof}. We finish with some concluding remarks in Section \ref{sec:conc}.


\subsection{Miscellaneous notation}

We often use the notation $[E]$ for a term which evaluates to 0 if the expression $E$ is false, and 1 if $E$ is true. The $\ell_1$ distance between two vectors $p$, $q$ is defined as $\|p - q\|_1 = \sum_i |p_i - q_i|$. $[n]$ denotes the set $\{1,\dots,n\}$, and $S^c$ denotes the complement $[n]\backslash S$ of the set $S$ in $[n]$. We will continue to associate permutations of $2n$ elements which entirely consist of disjoint transpositions with perfect matchings on the complete graph on $2n$ vertices. The set of all such perfect matchings will be denoted $\text{PM}_{2n}$.


\section{Communication complexity}
\label{sec:comm}

The first step in the proof is to go from the existence of an efficient communication protocol to the existence of a large subset of Alice's inputs such that two ``hard'' input distributions are distinguishable over that subset. The following lemma achieves this in a quite general setting; this is fairly standard, and similar (but somewhat more specific) statements have been proven in previous work \cite{kerenidis06,gavinsky06a,gavinsky08a}. For any subset $S$ of Alice's inputs, and any joint distribution $\mathcal{D}$ on Alice and Bob's inputs, let $\mathcal{D}^S$ denote the distribution on Bob's inputs induced by conditioning on the event that Alice's input is in set $S$.

\begin{lem}
\label{lem:commbound}
Let $f:\{0,1\}^m \times \{0,1\}^n \rightarrow \{0,1\}$ be a function of Alice and Bob's distributed inputs. Let $\mathcal{D}_0$, $\mathcal{D}_1$ be distributions on the zero/one-valued inputs, respectively, that are each uniform over Alice's inputs, when averaged over Bob's inputs. Assume there is a one-way classical protocol that computes $f$ with success probability $1-\epsilon$, for some $\epsilon<1/3$, and uses $c$ bits of communication. Then there exists an $S \subseteq \{0,1\}^m$ such that $|S| \ge \epsilon\,2^{m-c}$, and $\|\mathcal{D}^S_0 - \mathcal{D}^S_1\|_1 \ge 2(1-3 \epsilon)$.
\end{lem}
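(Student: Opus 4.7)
The plan is to use the standard reduction from a communication protocol to a distinguishing task, followed by a pigeonhole argument. First I would remove the randomness: since the mixture $\mu=\tfrac12\mathcal{D}_0+\tfrac12\mathcal{D}_1$ is a genuine distribution on inputs $(x,y)$, the protocol succeeds with probability at least $1-\epsilon$ under $\mu$, and by averaging over Alice's and Bob's private coins there is a fixed setting of the randomness making the protocol a deterministic protocol that still succeeds with probability $\ge 1-\epsilon$ on $\mu$. A deterministic one-way protocol that sends at most $c$ bits partitions $\{0,1\}^m$ into at most $2^c$ classes $\{S_t\}$ indexed by Alice's message $t$, and for each $t$ Bob's rule is some fixed function $g_t\colon\{0,1\}^n\to\{0,1\}$.

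Next I would bound the per-class success probability by the $\ell_1$-distance between $\mathcal{D}_0^{S_t}$ and $\mathcal{D}_1^{S_t}$. Writing $A_t=\{y:g_t(y)=1\}$, one has
\[
\tfrac12\Pr_{\mathcal{D}_0}[g_t(y)=0\mid x\in S_t]+\tfrac12\Pr_{\mathcal{D}_1}[g_t(y)=1\mid x\in S_t] = \tfrac12+\tfrac12\bigl(\mathcal{D}_1^{S_t}(A_t)-\mathcal{D}_0^{S_t}(A_t)\bigr)\le \tfrac12+\tfrac14\bigl\|\mathcal{D}_0^{S_t}-\mathcal{D}_1^{S_t}\bigr\|_1,
\]
since total variation equals $\tfrac12\|\cdot\|_1$. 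The uniformity hypothesis on Alice's marginals implies that $\Pr_{\mathcal{D}_b}[x\in S_t]=|S_t|/2^m$ for $b\in\{0,1\}$, so the success probability on $\mu$ is
\[
\sum_t\frac{|S_t|}{2^m}\cdot\tfrac12\bigl(\Pr_{\mathcal{D}_0}[g_t=0\mid S_t]+\Pr_{\mathcal{D}_1}[g_t=1\mid S_t]\bigr)\le \tfrac12+\tfrac14\sum_t\frac{|S_t|}{2^m}\bigl\|\mathcal{D}_0^{S_t}-\mathcal{D}_1^{S_t}\bigr\|_1.
\]
Combining with the success-probability lower bound $1-\epsilon$ yields $\sum_t(|S_t|/2^m)\|\mathcal{D}_0^{S_t}-\mathcal{D}_1^{S_t}\|_1\ge 2-4\epsilon$.

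Finally I would extract the claimed $S$ by pigeonhole. Split the indices into ``small'' classes with $|S_t|<\epsilon\,2^{m-c}$ and the remaining ``large'' classes. Using the trivial bound $\|\mathcal{D}_0^{S_t}-\mathcal{D}_1^{S_t}\|_1\le 2$ together with the fact that there are at most $2^c$ classes, the small classes contribute at most $2\cdot 2^c\cdot(\epsilon\,2^{m-c})/2^m=2\epsilon$ to the sum. If every large class had $\|\mathcal{D}_0^{S_t}-\mathcal{D}_1^{S_t}\|_1<2(1-3\epsilon)$, then using $\sum_t|S_t|/2^m\le 1$ their contribution would be strictly less than $2(1-3\epsilon)$, giving a total strictly less than $2-4\epsilon$ and contradicting the lower bound. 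Hence some large class $S=S_t$ satisfies both $|S|\ge\epsilon\,2^{m-c}$ and $\|\mathcal{D}_0^S-\mathcal{D}_1^S\|_1\ge 2(1-3\epsilon)$.

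I do not anticipate a serious obstacle: the only thing to watch is that the uniformity of Alice's marginal under both $\mathcal{D}_0$ and $\mathcal{D}_1$ is used crucially so that $\Pr_{\mathcal{D}_b}[x\in S_t]$ equals the same quantity $|S_t|/2^m$ for $b=0,1$; without this, the separate conditionings $\mathcal{D}_0^{S_t}$ and $\mathcal{D}_1^{S_t}$ would be weighted by different factors and one would not obtain a single clean sum.
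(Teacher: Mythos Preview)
Your proposal is correct and follows essentially the same route as the paper: fix randomness via Yao's principle on $\mu=\tfrac12(\mathcal D_0+\mathcal D_1)$, partition Alice's inputs by message, use the uniformity of Alice's marginal to obtain $\sum_S(|S|/2^m)\|\mathcal D_0^S-\mathcal D_1^S\|_1\ge 2(1-2\epsilon)$, and then split into small and large classes with threshold $\epsilon\,2^{m-c}$. The only cosmetic difference is that the paper derives the key averaged inequality by expanding the success probability in terms of the output function and taking absolute values, whereas you invoke the total-variation bound on the optimal test directly; both arrive at the same inequality and the same pigeonhole finish.
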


\begin{proof}
By the Yao principle \cite{yao79}, for any distribution $\mathcal{D}$ on Alice and Bob's inputs, there is a deterministic one-way protocol $\mathcal{P}$ that communicates $c$ bits and computes $f$ correctly on a $1-\epsilon$ fraction of the inputs (with respect to $\mathcal{D}$). We choose the distribution $\mathcal{D} = \frac{1}{2}\left(\mathcal{D}_0 + \mathcal{D}_1 \right)$. Each potential message that Alice might send to Bob identifies a subset of her inputs. For each of Alice's inputs $x$, let $S_x$ denote the subset of Alice's inputs identified by the message sent on input $x$. Let $\text{Out}_S(y):\{0,1\}^n \rightarrow \{0,1\}$ be the function which takes the value that Bob outputs when he receives an input $y$ and a message from Alice that corresponds to the subset $S \subseteq \{0,1\}^m$ of her inputs. Let $p_{xy}$, $q_{xy}$ denote the probability that Alice and Bob receive inputs $(x,y)$ under distributions $\mathcal{D}_0$, $\mathcal{D}_1$, respectively. Then, as they compute $f$ correctly on at most a $1-\epsilon$ fraction of the inputs under distribution $\mathcal{D}$,
\[ \frac{1}{2} \left( \sum_{x \in \{0,1\}^m} \sum_{y \in \{0,1\}^n} p_{xy} (1 - \text{Out}_{S_x}(y)) + q_{xy} \text{Out}_{S_x}(y) \right) \ge 1-\epsilon, \]
which implies
\[ \sum_{x \in \{0,1\}^m} \sum_{y \in \{0,1\}^n} \text{Out}_{S_x}(y)(q_{xy} - p_{xy}) \ge 1 - 2\epsilon. \]
Let $\mathcal{F}$ be the family of subsets of $\{0,1\}^m$ that corresponds to the partition of Alice's inputs into subsets determined by protocol $\mathcal{P}$. As $\mathcal{P}$ communicates $c$ bits, $|\mathcal{F}| \le 2^c$. Then
\[ \sum_{S \in \mathcal{F}} \sum_{y \in \{0,1\}^n} \text{Out}_S(y)\left( \sum_{x \in S} q_{xy} - \sum_{x \in S} p_{xy}\right) \ge 1-2 \epsilon, \]
which clearly implies
\[ \sum_{S \in \mathcal{F}} \sum_{y \in \{0,1\}^n} \left| \sum_{x \in S} p_{xy} - \sum_{x \in S} q_{xy}\right| \ge 2(1-2 \epsilon). \]
Now let $p^S_y$, $q^S_y$ denote the probability that Bob receives input $y$, conditioned on Alice's input being in set $S$, under distributions $\mathcal{D}_0$, $\mathcal{D}_1$ respectively. Then
\beas p^S_y &=& \Pr_{\mathcal{D}_0}[\text{Bob's input is }y|\text{Alice's input is in }S]\\
&=& \frac{\Pr_{\mathcal{D}_0}[\text{Bob's input is }y \cap \text{Alice's input is in }S]}{\Pr_{\mathcal{D}_0}[\text{Alice's input is in }S]}\\
&=& \frac{2^m \sum_{x \in S} p_{xy}}{|S|},
\eeas
and similarly for $q_{xy}$, where we use the fact that both $\mathcal{D}_0$ and $\mathcal{D}_1$ are uniform over Alice's inputs, when averaged over Bob's inputs. Thus
\[ \sum_{S \in \mathcal{F}} \frac{|S|}{2^m} \sum_{y \in \{0,1\}^n} \left| p^S_y - q^S_y \right| \ge 2(1-2 \epsilon), \]
or more succinctly
\[ \sum_{S \in \mathcal{F}} \frac{|S|}{2^m} \|\mathcal{D}^S_0 - \mathcal{D}^S_1\|_1 \ge 2(1-2\epsilon). \]
We now split the sum up depending on whether $|S|<s$ or $|S| \ge s$, for some integer $s$ to be determined. Then
\beas
2(1-2 \epsilon) &\le& \sum_{S \in \mathcal{F},|S| < s} \frac{|S|}{2^m} \|\mathcal{D}^S_0 - \mathcal{D}^S_1\|_1 + \sum_{S \in \mathcal{F},|S| \ge s} \frac{|S|}{2^m} \|\mathcal{D}^S_0 - \mathcal{D}^S_1\|_1\\
&\le& 2 \sum_{S \in \mathcal{F},|S| < s} \frac{|S|}{2^m} + \max_{S,|S| \ge s} \|\mathcal{D}^S_0 - \mathcal{D}^S_1\|_1\\
&\le& s\,2^{c-m+1} + \max_{S,|S| \ge s} \|\mathcal{D}^S_0 - \mathcal{D}^S_1\|_1.
\eeas
Thus there exists an $S$ with $|S| \ge s$ such that
\[ \|\mathcal{D}^S_0 - \mathcal{D}^S_1\|_1 \ge 2(1-2 \epsilon - s\,2^{c-m}). \]
Taking $s = \epsilon\,2^{m-c}$, the proof is complete.
\end{proof}


\section{Combinatorial preliminaries}
\label{sec:comb}

We will need to calculate and estimate a number of combinatorial quantities to prove Theorem \ref{thm:main}. We start with some easy calculations related to perfect matchings, which we state without proof.

\begin{itemize}
\item The number of perfect matchings on the complete graph with $2n$ vertices is
\[ N_{2n} := \frac{(2n)!}{2^n n!} = (2n-1)(2n-3)\dots1. \]
\item For any $2n$-bit string $x$ with Hamming weight $2k$, and any integer $d$,
\[ \sum_{M \in \text{PM}_{2n}} [d(Mx,x) = d] = \begin{cases} N_{2n} 2^{d/2} \frac{\binom{n}{d/2} \binom{n-d/2}{k-d/4}}{\binom{2n}{2k}} & \text{if $d$ is a multiple of 4}\\0& \text{otherwise.}\end{cases}\]
\item In particular, for any $2n$-bit string $x$ with Hamming weight $2k$,
\[ \sum_{M \in \text{PM}_{2n}} [Mx=x] = N_{2k}\,N_{2(n-k)} = N_{2n} \frac{\binom{n}{k}}{\binom{2n}{2k}}. \]
\item For any pair of $2n$-bit strings $x$ and $y$, both with Hamming weight $2k$, let $t_{ab}$ be the number of bits where $x$ is equal to $a$ and $y$ is equal to $b$. Then
\[ \sum_{M \in \text{PM}_{2n}} [Mx=x,My=y] = N_{t_{00}}\,N_{t_{01}}\,N_{t_{10}}\,N_{t_{11}}. \]
This implies that
\[ \sum_{M \in \text{PM}_{2n}} [Mx=x,My=y] = \begin{cases} N_{|x \wedge y|}\,N_{2k-|x \wedge y|}^2\,N_{2n-4k+|x \wedge y|} & \text{if $|x \wedge y|$ is even}\\ 0 & \text{if $|x \wedge y|$ is odd,}\end{cases} \]
implying in turn that
\be
\label{eqn:comb} \sum_{M \in \text{PM}_{2n}} [Mx=x,My=y] = \begin{cases} N_{2(k-\ell)}\,N_{2\ell}^2 N_{2(n-k-\ell)} & \text{if $d(x,y)=4\ell$} \\ 0 & \text{otherwise.}\end{cases} \ee
\end{itemize}

We now turn to finding some technical upper bounds on quantities related to binomial coefficients.

\begin{lem}
\label{lem:binom}
For any integers $n,k \ge 0$,
\[ \frac{\binom{4n}{2n}\binom{n}{k}^2 \binom{4n}{4k} }{ \binom{2n}{n} \binom{2n}{2k}^3 } \le 2^{2n}. \]
\end{lem}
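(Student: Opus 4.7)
The strategy is to factor the left-hand side so that only scaled central-binomial ratios appear, and then invoke a one-variable monotonicity. The main algebraic ingredient is the identity
\[ \binom{2n}{2k} \;=\; \binom{n}{k}\cdot\frac{N_{2n}}{N_{2k}\,N_{2n-2k}}, \]
which is just a rearrangement of the third combinatorial preliminary listed above ($N_{2k}N_{2(n-k)} = N_{2n}\binom{n}{k}/\binom{2n}{2k}$). Applying it to the pairs $(n,k)$ and $(2n,2k)$, together with its central specialization $\binom{4m}{2m}/\binom{2m}{m} = N_{4m}/N_{2m}^2$ (take $n\to 2m$, $k\to m$), I would telescope the left-hand side into
\[ \frac{\binom{4n}{2n}\binom{n}{k}^2 \binom{4n}{4k}}{\binom{2n}{n}\binom{2n}{2k}^3} \;=\; \left(\frac{\binom{4n}{2n}}{\binom{2n}{n}}\right)^{\!2} \cdot \frac{\binom{2k}{k}\,\binom{2n-2k}{n-k}}{\binom{4k}{2k}\,\binom{4n-4k}{2n-2k}}. \]

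Next, introduce the scaled central-binomial ratio $r(m) := \binom{4m}{2m}\big/\bigl(4^m\binom{2m}{m}\bigr)$, with $r(0)=1$. Substituting $\binom{4m}{2m}/\binom{2m}{m}=4^m r(m)$ in the three places it appears collapses the right-hand side to $4^n\cdot r(n)^2/(r(k)\,r(n-k))$. Assuming $0\le k\le n$ (the case $k>n$ is vacuous since $\binom{n}{k}=0$), the lemma is therefore equivalent to the clean inequality
\[ r(n)^2 \;\le\; r(k)\,r(n-k). \]

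This in turn will follow from the monotonicity of $r$. A direct computation gives
\[ \frac{r(m+1)}{r(m)} \;=\; \frac{(4m+1)(4m+3)}{4(2m+1)^2} \;=\; \frac{16m^2+16m+3}{16m^2+16m+4} \;<\; 1, \]
so $r$ is strictly decreasing in $m$. Since $k,\,n-k \le n$, this yields $r(k),\,r(n-k) \ge r(n)$, hence $r(k)\,r(n-k) \ge r(n)^2$, as required. The only nontrivial step is spotting the right factorization; after that it is a one-line monotonicity check, and one sees that the bound $2^{2n}$ is essentially tight, since $r(m)\to 1/\sqrt{2}$ as $m\to\infty$ and $r(n)^2/(r(k)r(n-k))\to 1$ in the balanced regime $k\asymp n/2$.
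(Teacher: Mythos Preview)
Your proof is correct and takes a genuinely different route from the paper's. The paper expands all the binomial coefficients into an explicit product of fractions of the form $\frac{2j-1}{2j}$, argues via ``tedious algebra'' that the resulting expression is maximised at $k=n/2$, and then verifies the inequality at that extremal point by a termwise comparison. Your argument instead exploits the identity $\binom{2n}{n}\binom{n}{k}^2=\binom{2k}{k}\binom{2n-2k}{n-k}\binom{2n}{2k}$ (which, amusingly, is exactly equation~(\ref{eqn:binom}) used later in the paper's proof of Lemma~\ref{lem:binom3}) at two scales to collapse the ratio to $4^n\,r(n)^2/(r(k)r(n-k))$, after which only the one-line monotonicity of $r(m)=\binom{4m}{2m}/(4^m\binom{2m}{m})$ is needed. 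Your reduction is cleaner and more transparent: it replaces the paper's ad~hoc analysis of the maximiser with a structural inequality $r(n)^2\le r(k)r(n-k)$ whose proof is a single ratio computation, and it makes the tightness at $k\sim n/2$ evident rather than an artefact of checking the extremal case by hand.
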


\begin{proof}
Evaluating all the binomial coeffients, the left-hand side is equal to
\[ \frac{(4n)!^2 n!^4}{(2n)!^6} \frac{(2k)!^3}{k!^2 (4k)!} \frac{2(n-k)!^3}{(n-k)!^2 (4(n-k))!}, \]
which can eventually be written out as
\[ 2^{2n} \frac{ \left(\frac{4n-1}{4n}\right)^2 \left(\frac{4n-3}{4n-2}\right)^2 \left(\frac{4n-5}{4n-4}\right)^2 \dots \left(\frac{2n+1}{2n+2}\right)^2 }{ \left(\frac{4k-1}{4k}\right)\left(\frac{4k-3}{4k-2}\right) \dots \left(\frac{2k+1}{2k+2}\right) \left(\frac{4(n-k)-1}{4(n-k)}\right) \left(\frac{4(n-k)-3}{4(n-k)-2}\right) \dots \left(\frac{2(n-k)+1}{2(n-k)+2}\right) }. \]
One can show with some tedious algebra that the denominator decreases with $k$ for $0 \le k \le n/2$, and that the overall maximum is therefore found at $k=n/2$ (for $n$ even; when $n$ is odd, this maximum is not actually achieved). Substituting this value of $k$ and simplifying, we need to show
\[ \frac{(4n-1)(4n-3)(4n-5)\dots(2n+1)(2n)(2n-2)(2n-4)\dots(n+2)}{(4n)(4n-2)(4n-4)\dots(2n+2)(2n-1)(2n-3)(2n-5)\dots(n+1)} \le 1. \]
This would follow from showing that, for any $0 \le a \le n/2-1$,
\[ \frac{(4n-4a-1)(4n-4a-3)}{(4n-4a)(4n-4a-2)} \le \frac{2n-2a-1}{2n-2a}, \]
which is equivalent to the inequality
\[ (4n-4a-1)(4n-4a-3) \le (4n-4a-2)^2. \]
It can easily be verified that this inequality holds for all $a$.
\end{proof}

\begin{lem}
\label{lem:binom3}
For any integer $n \ge 0$,
\[ \sum_{k=0}^n \frac{\binom{n}{k}^2}{\binom{2n}{2k}} = \frac{2^{2n}}{\binom{2n}{n}}. \]
\end{lem}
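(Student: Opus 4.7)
The plan is to reduce the stated identity to the classical central binomial convolution
\[ \sum_{k=0}^n \binom{2k}{k}\binom{2n-2k}{n-k} = 4^n, \]
which then implies the lemma in one line.

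First, I would multiply through by $\binom{2n}{n}$ and simplify the summand. Writing out the factorials,
\[ \binom{2n}{n}\cdot\frac{\binom{n}{k}^2}{\binom{2n}{2k}} = \frac{(2n)!}{(n!)^2}\cdot\frac{(n!)^2(2k)!(2n-2k)!}{(k!)^2((n-k)!)^2(2n)!} = \frac{(2k)!}{(k!)^2}\cdot\frac{(2n-2k)!}{((n-k)!)^2} = \binom{2k}{k}\binom{2n-2k}{n-k}. \]
So the identity to be proved is equivalent to the central binomial convolution displayed above.

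Second, I would establish that convolution via generating functions. The standard expansion
\[ \frac{1}{\sqrt{1-4x}} = \sum_{k\ge 0}\binom{2k}{k}x^k \]
(easily checked from the generalised binomial series) gives, upon squaring,
\[ \frac{1}{1-4x} = \Bigl(\sum_{k\ge 0}\binom{2k}{k}x^k\Bigr)\Bigl(\sum_{j\ge 0}\binom{2j}{j}x^j\Bigr) = \sum_{n\ge 0}\Bigl(\sum_{k=0}^n\binom{2k}{k}\binom{2n-2k}{n-k}\Bigr)x^n. \]
But the left-hand side is $\sum_{n\ge 0}4^n x^n$, and comparing coefficients of $x^n$ yields $\sum_{k=0}^n\binom{2k}{k}\binom{2n-2k}{n-k}=4^n$, completing the proof.

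There is no genuine obstacle here: the only step requiring care is the algebraic simplification of $\binom{2n}{n}\binom{n}{k}^2/\binom{2n}{2k}$, and once that telescoping of factorials is done the result is immediate from a textbook generating function identity. If a more elementary derivation is preferred, the convolution also admits a short combinatorial proof by counting lattice paths from $(0,0)$ to $(2n,0)$ with steps $(1,\pm 1)$ (there are $\binom{2n}{n}\approx 4^n/\ldots$ of them, and splitting at the last return to the axis gives the convolution), but the generating function argument is shortest.
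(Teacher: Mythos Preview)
Your proof is correct and follows essentially the same route as the paper: both reduce the identity to the Vandermonde-type convolution $\sum_{k=0}^n \binom{2k}{k}\binom{2n-2k}{n-k}=4^n$ via the simplification $\binom{2n}{n}\binom{n}{k}^2=\binom{2k}{k}\binom{2n-2k}{n-k}\binom{2n}{2k}$. The only cosmetic differences are that the paper derives this simplification by chaining trinomial-revision identities rather than expanding factorials directly, and cites the convolution from \cite{graham04} rather than proving it with the generating function $(1-4x)^{-1/2}$ as you do; your parenthetical lattice-path remark at the end is a bit garbled (those paths number $\binom{2n}{n}$, not $4^n$), but it is not part of your main argument.
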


\begin{proof}
Using the identity
\be
\label{eqn:binom}
\binom{2n}{n} \binom{n}{k}^2 = \binom{2k}{k}\binom{2n-2k}{n-k}\binom{2n}{2k},
\ee
the lemma reduces to the statement that
\[ \sum_{k=0}^n \binom{2k}{k} \binom{2n-2k}{n-k} = 2^{2n}, \]
which is equation (5.39) in \cite{graham04}. But identity (\ref{eqn:binom}) follows from applying a succession of ``trinomial revision'' identities \cite{graham04}:
\[ \binom{2n}{n} \binom{n}{k}^2 = \binom{2n}{k}\binom{2n-k}{n}\binom{n}{k} = \binom{2n}{k}\binom{2n-k}{k}\binom{2n-2k}{n-k} = \binom{2n}{2k}\binom{2k}{k}\binom{2n-2k}{n-k}. \]
There is an alternative combinatorial proof of this identity, which we leave to the interested reader.
\end{proof}

The following inequality is well-known, but we include a proof for completeness.

\begin{lem}
\label{lem:binom2}
For any integers $n,k \ge 0$,
\[ \binom{n}{k}^2 \le \binom{2n}{2k}. \]
\end{lem}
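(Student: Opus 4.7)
The cleanest plan is a one-line injective argument. The left-hand side $\binom{n}{k}^2$ counts ordered pairs $(A,B)$ of $k$-element subsets of $[n]$, while the right-hand side $\binom{2n}{2k}$ counts $2k$-element subsets of $[2n]$. The map $(A,B) \mapsto A \cup (n+B)$, where $n+B := \{n+b : b \in B\}$, sends any such pair to a $2k$-subset of $[2n]$ (since $A \subseteq [n]$ and $n+B \subseteq \{n+1,\dots,2n\}$ are disjoint and of total size $2k$). This map is injective: given the image $C$, one recovers $A = C \cap [n]$ and $B = (C \cap \{n+1,\dots,2n\}) - n$. Comparing domain and codomain sizes yields the claim immediately.

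An equally short alternative would reuse the identity
\[ \binom{2n}{n}\binom{n}{k}^2 = \binom{2k}{k}\binom{2n-2k}{n-k}\binom{2n}{2k} \]
already established in the proof of Lemma \ref{lem:binom3}. After dividing by $\binom{2n}{n}\binom{2n}{2k}$, the desired inequality becomes $\binom{2k}{k}\binom{2n-2k}{n-k} \le \binom{2n}{n}$, which is the $j=k$ term of Vandermonde's convolution $\binom{2n}{n} = \sum_{j=0}^{n}\binom{2k}{j}\binom{2n-2k}{n-j}$ and thus bounded above by the nonnegative full sum.

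There is no real obstacle here; the edge cases ($k>n$, $k=0$, $k=n$) are trivial, and no asymptotic analysis or estimation is required. I would present the injective argument, since it is the most transparent and avoids invoking the identity from the preceding lemma.
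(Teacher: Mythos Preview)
Your injective argument is correct and is precisely the combinatorial proof the paper gives: the paper notes that $\binom{2n}{2k}$ counts $2k$-subsets of a $2n$-set, while $\binom{n}{k}^2$ counts those $2k$-subsets that take exactly $k$ elements from each half, which is a subcollection. Your explicit map $(A,B)\mapsto A\cup(n+B)$ just makes the implied injection visible.
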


\begin{proof}
There is a simple combinatorial proof of this statement\footnote{Thanks to Ronald de Wolf for pointing this out.}. The right-hand side counts the number of ways of choosing $2k$ elements from a set $S$ of size $2n$, while the left-hand side counts the number of ways of choosing $k$ elements from the first $n$ elements of $S$, and $k$ elements from the last $n$ elements of $S$. The latter is clearly upper bounded by the former.
\end{proof}

Finally, we will need to evaluate some sums involving binomial coefficients.

\begin{lem}
\label{lem:genfn}
For any $n \ge 0$,
\[ \sum_k k \binom{n}{2k} = \begin{cases}\frac{1}{8}\left(n2^n \right)& \text{if }n \ge 2\\0& \text{if }n\le1, \end{cases} \]
and also
\[ \sum_k k^2 \binom{n}{2k} = \begin{cases}\frac{1}{32}\left(n(n+1)2^n \right)& \text{if }n \ge 3\\1& \text{if }n=2\\0& \text{if }n\le1. \end{cases} \]
\end{lem}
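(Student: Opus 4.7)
The plan is to derive both identities by the standard trick of extracting even-indexed Taylor coefficients and then applying the operator $x\,d/dx$ to bring down factors of $2k$. Start from the elementary identity
\[
(1+x)^n + (1-x)^n = 2\sum_k \binom{n}{2k} x^{2k},
\]
call the right-hand sum $2g(x)$, and note that $g'(1) = \tfrac{n}{2}\,2^{n-1} \cdot \tfrac{1}{1}$ whenever the $(1-x)^{n-1}$ term vanishes at $x=1$, i.e. whenever $n\ge 2$; similarly $g''(1) = \tfrac{n(n-1)}{2}\,2^{n-2}$ whenever $n\ge 3$. For $n\ge 2$, applying $x\,\tfrac{d}{dx}$ to $g$ and evaluating at $x=1$ gives $\sum_k 2k\binom{n}{2k} = g'(1) = n\,2^{n-2}$, which is exactly the first claimed value $\tfrac{1}{8}(n\,2^n)$.

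For the second identity, I apply $x\,\tfrac{d}{dx}$ twice to $g$ and evaluate at $x=1$, using that $[x\,\tfrac{d}{dx}]^2 g(x) = x g'(x) + x^2 g''(x)$. At $x=1$ this yields, for $n\ge 3$,
\[
\sum_k (2k)^2 \binom{n}{2k} = g'(1) + g''(1) = n\,2^{n-2} + n(n-1)\,2^{n-3} = n(n+1)\,2^{n-3},
\]
so $\sum_k k^2 \binom{n}{2k} = \tfrac{1}{32}\,n(n+1)\,2^n$ as claimed.

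The remaining step is to dispatch the small-$n$ boundary cases, which are the only place one has to be a bit careful (since the $(1-x)^{n-1}$ or $(1-x)^{n-2}$ terms do not vanish at $x=1$ for small $n$). For $n\le 1$ the only nonzero binomial coefficient in either sum is the $k=0$ term, which contributes $0$, matching the stated values; for $n=2$ in the $k^2$-sum the only nonzero term is $k=1$, giving $1^2 \binom{2}{2} = 1$, again matching. These direct verifications plug the gap left by the generating-function computation, which required $n\ge 2$ (respectively $n\ge 3$) for the relevant boundary derivatives to vanish. I do not foresee a real obstacle: the entire argument is a two-line calculus computation plus three trivial case checks.
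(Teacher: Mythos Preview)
Your proposal is correct and takes essentially the same generating-function approach as the paper: the paper applies $x\,d/dx$ to $(1+x)^n$ and then extracts even terms via $\tfrac{1}{2}(f(x)+f(-x))$, whereas you extract even terms first and then apply $x\,d/dx$, but since these operations commute on power series the two computations coincide (both arrive at $\tfrac{nx}{4}((1+x)^{n-1}-(1-x)^{n-1})$ and its analogue for the second identity). Your explicit handling of the small-$n$ boundary cases is a slight improvement in presentation over the paper, which leaves those checks implicit.
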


\begin{proof}
We prove this lemma using the method of generating functions \cite{graham04}. Consider the function
\[ f(x) := nx(1+x)^{n-1} = x \frac{d}{dx} (1+x)^n = \sum_k k \binom{n}{k} x^k. \]
The quantity $\sum_k (2k) \binom{n}{2k} x^{2k}$ is equal to the sum of the terms of this series that correspond to even $k$. The function $\frac{1}{2}\left(f(x) + f(-x)\right)$ extracts precisely these terms. Thus
\[ \sum_k k \binom{n}{2k} x^{2k} = \frac{1}{4}\left(f(x) + f(-x)\right) = \frac{nx}{4} \left( (1+x)^{n-1} - (1-x)^{n-1} \right). \]
Substituting $x=1$ proves the first part of the lemma. For the second part, differentiate $f$ again, and multiply by $x$, to obtain
\[ g(x) := n x(1+nx)(1+x)^{n-2} = x \frac{d}{dx} nx(1+x)^{n-1} = \sum_k k^2 \binom{n}{k} x^k. \]
As before, considering $\frac{1}{2}\left(g(x) + g(-x)\right)$ we get
\[ \sum_k k^2 \binom{n}{2k} x^{2k} = \frac{1}{8}\left(g(x) + g(-x)\right) = \frac{nx}{8}\left((1+nx)(1+x)^{n-2} - (1-nx)(1-x)^{n-2} \right). \]
Substituting $x=1$ proves the second part of the lemma.
\end{proof}


\section{Fourier analysis}
\label{sec:fourier}

For a function $f:\{0,1\}^n \rightarrow \R$, we define the Fourier transform of $f$ by
\[ \hat{f}(S) = \frac{1}{2^n} \sum_{x \in \{0,1\}^n} (-1)^{\sum_{i \in S} x_i} f(x), \]
for $S \subseteq [n]$. Subsets of $[n]$ are in obvious correspondence with $n$-bit strings, and we sometimes use the notation $\hat{f}(s)$ to imply the identification of $S$ with its characteristic vector $s$. For any functions $f,g:\{0,1\}^n \rightarrow \R$, it is easy to show that
\be
\label{eq:fourier} \sum_{x,y \in \{0,1\}^n} f(x) f(y) g(x + y) = 2^{2n} \sum_{S \subseteq [n]} \hat{g}(S) \hat{f}(S)^2. 
\ee
Let $W_k(f)$ be the Fourier weight of $f$ at level $k$, i.e.\ $W_k(f) = \sum_{S, |S|=k} \hat{f}(S)^2$. 


\subsection{Krawtchouk polynomials}
\label{sec:krawtchouk}

We will use properties of the Krawtchouk polynomials to obtain our bounds. The $k$'th Krawtchouk polynomial $K_k^n$ is defined as the unique degree $k$ polynomial satisfying
\[ K_k^n(x) = \sum_{i=0}^k (-1)^i \binom{x}{i} \binom{n-x}{k-i} \]
for integer $x$, and the Krawtchouk transform of a function $g:\{0,1,\dots,n\} \rightarrow \R$ is given by the function $h:\{0,1,\dots,n\} \rightarrow \R$ defined by
\[ h(x) = \frac{1}{2^n} \sum_{k=0}^n K_k^n(x) g(k). \]
Krawtchouk polynomials are important for us because, if $f:\{0,1\}^n \rightarrow \R$ is a symmetric function, i.e.\ $f(x) = g(|x|)$ for some $g$, then the Fourier transform of $f$ is given by the Krawtchouk transform of $g$:
\[ \hat{f}(s) = \frac{1}{2^n} \sum_{x \in \{0,1\}^n} (-1)^{x \cdot s} f(x) = \frac{1}{2^n} \sum_{k=0}^n \left( \sum_{x \in \{0,1\}^n,|x|=k} (-1)^{x \cdot s} \right) g(k) = \frac{1}{2^n} \sum_{k=0}^n K_k^n(|s|) g(k). \]
The Krawtchouk polynomials satisfy many identities and inequalities \cite{macwilliams83,krasikov99}. In particular, it holds that $K_k^n(x) = (-1)^x K_{n-k}^n(x)$, and we have the orthogonality relation
\[ \sum_{x=0}^n \binom{n}{x} K_r^n(x) K_s^n(x) = 2^n \binom{n}{r} \delta_{rs}. \]
%
%
%
We will need the following explicit expressions for some of the Krawtchouk polynomials:
\[ K_0^n(x) = 1,\;\; K_2^n(x) = \binom{n}{2} - 2nx + 2x^2,\;\;K_n^{2n}(x) = \begin{cases} (-1)^{x/2} \binom{2n}{n}\frac{\binom{n}{x/2}}{\binom{2n}{x}} & \text{if $x$ is even}  \\ 0 & \text{if $x$ is odd.} \end{cases} \]
For a derivation of the last expression, see \cite{linial02}. Finally, from the symmetry relation
\[ \binom{n}{x} K_k^n(x) = \binom{n}{k} K_x^n(k) \]
one can deduce
\[ K_k^n(0) = \binom{n}{k},\;\; K_k^n(2) = \frac{1}{n(n-1)} \binom{n}{k}\left( (n-2k)^2 - n \right).\]
%


\subsection{Upper bounds on Fourier weight}

We will use the following lemma of Kahn, Kalai and Linial \cite{kahn88}, which follows from the Bonami-Beckner hypercontractive inequality \cite{bonami70,beckner75}.

\begin{lem}[KKL Lemma \cite{kahn88}]
\label{lem:kkl}
Let $f:\{0,1\}^n \rightarrow \{-1,0,1\}$ be a function that takes a nonzero value at $p$ positions. Then, for any $0 \le \delta \le 1$,
\[ \sum_{S \subseteq [n]} \delta^{|S|} \hat{f}(S)^2 \le \left(\frac{p}{2^n}\right)^{\frac{2}{1+\delta}}. \]
\end{lem}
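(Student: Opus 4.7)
The plan is to derive this as a direct consequence of the Bonami--Beckner hypercontractive inequality, which is already cited alongside the statement. Introduce the noise operator $T_\rho$ acting on functions $\{0,1\}^n \to \R$ through its Fourier representation $\widehat{T_\rho f}(S) = \rho^{|S|} \hat{f}(S)$. By Parseval's identity, the quantity we want to bound can be rewritten as
\[ \sum_{S \subseteq [n]} \delta^{|S|} \hat{f}(S)^2 \;=\; \|T_{\sqrt{\delta}} f\|_2^2, \]
so the task reduces to controlling $\|T_{\sqrt{\delta}} f\|_2$ by a norm that detects sparsity rather than magnitude.

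Next, I would invoke the $(2,1+\delta)$-hypercontractivity of $T_\rho$: for $1 \le p \le q$ and $\rho \le \sqrt{(p-1)/(q-1)}$, one has $\|T_\rho f\|_q \le \|f\|_p$. Taking $p = 1+\delta$ and $q = 2$ gives exactly the threshold $\rho = \sqrt{\delta}$, so
\[ \|T_{\sqrt{\delta}} f\|_2 \;\le\; \|f\|_{1+\delta} \;=\; \bigl(\E_x[|f(x)|^{1+\delta}]\bigr)^{1/(1+\delta)}. \]

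The final step uses the hypothesis that $f$ takes values in $\{-1,0,1\}$ and is nonzero on exactly $p$ inputs. Then $|f(x)|^{1+\delta}$ equals the indicator $[f(x) \ne 0]$ at every point (this is where the restriction $\delta \ge 0$ matters, and the upper bound $\delta \le 1$ is inherited from the hypercontractive regime $p \ge 1$). Hence $\E_x[|f(x)|^{1+\delta}] = p/2^n$, and squaring the previous inequality yields
\[ \sum_{S \subseteq [n]} \delta^{|S|} \hat{f}(S)^2 \;\le\; \left(\frac{p}{2^n}\right)^{2/(1+\delta)}, \]
as required. There is essentially no obstacle beyond the Bonami--Beckner inequality itself, which I would treat as a black-box citation to \cite{bonami70,beckner75} (its usual proof is by tensorisation of the two-point inequality on $\{0,1\}$); everything else is bookkeeping with Parseval and the definition of $T_\rho$. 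As sanity checks, the bound is tight at $\delta = 1$ (both sides equal $p/2^n$ by Parseval), and at $\delta = 0$ it reduces to $\hat{f}(\emptyset)^2 \le (p/2^n)^2$, which is immediate from $|\E[f]| \le p/2^n$.
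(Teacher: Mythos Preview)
Your derivation is correct and is precisely the standard argument the paper alludes to: the paper does not give its own proof of this lemma but simply cites it as a consequence of the Bonami--Beckner hypercontractive inequality, which is exactly the black box you invoke. One tiny slip in your commentary (not in the mathematics): the constraint $\delta \le 1$ comes from requiring $p = 1+\delta \le q = 2$ in the hypercontractive bound, not from $p \ge 1$; the latter gives $\delta \ge 0$.
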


If $f$ is the characteristic function of a set, the KKL Lemma can be used to obtain quite tight bounds on the Fourier weight of $f$ at both low {\em and high} levels, which we formalise as Lemma \ref{lem:uberkkl} below. Part (i) of this lemma is well-known (e.g.\ see \cite{gavinsky08a,dewolf08}); however, part (ii) appears to be new (albeit not difficult).

\begin{lem}
\label{lem:uberkkl}
Consider an arbitrary non-empty subset $A \subseteq \{0,1\}^n$, let $f$ be the characteristic function of $A$, and set $2^{-\alpha} = \frac{|A|}{2^n}$. Then, for any $1 \le k \le (\ln 2) \alpha$:
\beas
&\text{(i)}& W_k(f) \le 2^{-2 \alpha} \left(\frac{(2e \ln 2)\alpha}{k}\right)^k, \text{ and}\\
&\text{(ii)}& W_{n-k}(f) \le 2^{-2 \alpha} \left(\frac{(2e \ln 2)\alpha}{k}\right)^k.
\eeas
\end{lem}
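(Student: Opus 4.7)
\textbf{Plan for the proof of Lemma \ref{lem:uberkkl}.} The natural approach is to feed $f$ into the KKL Lemma directly and optimize the parameter $\delta$. Since for any single level $k$ we can bound $W_k(f) = \sum_{|S|=k} \hat{f}(S)^2 \le \delta^{-k} \sum_{S} \delta^{|S|} \hat{f}(S)^2$, the KKL Lemma gives $W_k(f) \le \delta^{-k} 2^{-2\alpha/(1+\delta)}$ for any $\delta \in [0,1]$. The elementary inequality $1/(1+\delta) \ge 1-\delta$ (equivalent to $\delta^2 \ge 0$) lets us replace the unwieldy exponent by $1-\delta$, giving
\[ W_k(f) \le 2^{-2\alpha}\,\delta^{-k}\,2^{2\alpha\delta}. \]

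The first step, then, is to choose $\delta$ to minimise this expression (or at least get close). I would set $\delta = k/(2 \alpha \ln 2)$; this lies in $[0,1]$ precisely when $k \le 2\alpha \ln 2$, so in particular it is legal throughout the assumed range $1 \le k \le (\ln 2)\alpha$. With this substitution, $\delta^{-k} = (2\alpha \ln 2/k)^k$ and $2^{2\alpha \delta} = 2^{k/\ln 2} = e^k$, giving the target bound
\[ W_k(f) \le 2^{-2\alpha} \left(\frac{(2e\ln 2)\alpha}{k}\right)^k. \]
This establishes part (i); it is essentially the standard corollary of KKL, and the only ``obstacle'' is picking the right $\delta$.

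For part (ii), the plan is to reduce it to part (i) by a sign twist. Define $g(x) = (-1)^{|x|} f(x)$; this function again takes values in $\{-1,0,1\}$ and has exactly $p = |A| = 2^{n-\alpha}$ nonzero positions, so its KKL parameter $\alpha$ is the same. A direct Fourier computation gives
\[ \hat{g}(S) = \frac{1}{2^n}\sum_{x} (-1)^{(S \triangle [n])\cdot x} f(x) = \hat{f}(S^c), \]
so the Fourier spectrum of $g$ is simply that of $f$ reflected about the middle level. In particular $W_k(g) = W_{n-k}(f)$, and applying part (i) to $g$ immediately yields the claimed bound on $W_{n-k}(f)$. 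I do not expect any real obstacle here — the only thing worth double-checking is that the sign twist preserves the ``number of nonzero positions'' hypothesis of KKL, which it manifestly does.
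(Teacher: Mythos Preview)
Your proof is correct and follows essentially the same approach as the paper: apply the KKL Lemma, relax and optimise in $\delta$ to obtain part~(i), and for part~(ii) use the sign twist $g(x)=(-1)^{|x|}f(x)$ to flip the Fourier levels. The only cosmetic difference is that the paper reparametrises via $\delta=\gamma/(\alpha-\gamma)$ (making the exponent exactly $-2(\alpha-\gamma)$ and then bounding $(\alpha/\gamma-1)^k$ by $(\alpha/\gamma)^k$), whereas you use $1/(1+\delta)\ge 1-\delta$ directly; both relaxations lead to the same optimal choice and the same final bound.
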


\begin{proof}
By the KKL Lemma, we have
\[ W_k(f) \le \delta^{-k} 2^{-2\alpha/(1+\delta)} \]
for any $k>0$ and any $0 < \delta \le 1$. We now take $\delta = \frac{\gamma}{\alpha-\gamma}$ for some $0 < \gamma \le \alpha/2$, implying
\[ W_k(f) \le \left(\frac{\alpha}{\gamma} - 1\right)^k 2^{-2(\alpha-\gamma)} < \left(\frac{\alpha}{\gamma}\right)^k 2^{-2(\alpha-\gamma)}. \]
Minimising over $\gamma$, we find that the minimum is achieved when $\gamma = \frac{k}{2 \ln 2}$, giving an upper bound
\[ W_k(f) \le 2^{-2 \alpha} \left(\frac{(2e \ln 2)\alpha}{k}\right)^k \approx 2^{-2 \alpha} \left(\frac{3.77\,\alpha}{k}\right)^k. \]
For the second part of the lemma, consider the function $g(x) = (-1)^{|x|} f(x)$. By the first part, we have $W_k(g) \le 2^{-2 \alpha} \left(\frac{(2e \ln 2)\alpha}{k}\right)^k$. We also have
\[ \hat{g}(S) = \frac{1}{2^n} \sum_{x \in \{0,1\}^n} (-1)^{\sum_{i \in S} x_i} (-1)^{\sum_{j=1}^n x_j} f(x) = \frac{1}{2^n} \sum_{x \in \{0,1\}^n} (-1)^{\sum_{i \in S^c} x_i} f(x) = \hat{f}(S^c), \]
so $W_k(g) = W_{n-k}(f)$. The second part of the lemma follows.
\end{proof}

These bounds are almost tight. Consider the $d$-dimensional subspace $S \subseteq \{0,1\}^n$ that consists of all bit strings that begin with $n-d$ zeroes. Then the Fourier transform of $S$ is uniform on the orthogonal subspace $S^\perp$, which consists of all bit strings that end with $d$ zeroes. It therefore holds that
\[ W_k(S) = \frac{1}{2^{2(n-d)}} \binom{n-d}{k} = 2^{-2 \alpha} \binom{\alpha}{k} \ge 2^{-2 \alpha} \left(\frac{\alpha}{k}\right)^k, \]
where we use a standard bound on binomial coefficients, and as before define $2^{-\alpha} = \frac{|S|}{2^n}$.


\section{Proof of Theorem \ref{thm:main}}
\label{sec:proof}

Using Lemma \ref{lem:commbound}, we will put a lower bound on the classical one-way communication complexity of the {\sc PM-Invariance} problem. The two distributions we will consider are defined as follows.

\begin{itemize}
\item $\mathcal{D}_0$: $x$ is picked uniformly at random consistent with $|x|=n$, and $M$ is a perfect matching consistent with $d(Mx,x) \ge n/8$, but otherwise uniformly random.

\item $\mathcal{D}_1$: $x$ is picked uniformly at random consistent with $|x|=n$, and $M$ is a perfect matching consistent with $Mx=x$, but otherwise uniformly random.
\end{itemize}

Let $A$ be an arbitrary subset of $\{0,1\}^{2n}$. We will show that $\|\mathcal{D}_0^A - \mathcal{D}_1^A \|_1$ cannot be large unless $A$ is small. To do this, we will show that both distributions are in fact close to uniform.

In the case of $\mathcal{D}_0$, this is quite straightforward. Let $U$ denote the uniform distribution on perfect matchings on the complete graph with $2n$ vertices. Then, for any $x \in \{0,1\}^{2n}$ such that $|x|=n$,
\[ \|\mathcal{D}_0^{\{x\}} - U\|_1 = 2 \sum_{M \in \text{PM}_{2n}} \frac{[d(Mx,x)<n/8]}{N_{2n}} = \frac{2}{\binom{2n}{n}} \sum_{d < n/32} 2^{2d} \binom{n}{2d} \binom{n-2d}{n/2-d}; \]
the first equality here is just the fact that for any probability distributions $p$, $q$, $\sum_i |p_i - q_i| = 2 \sum_{i,p_i < q_i} (q_i-p_i)$. In the second equality, for each $d < n/32$, the summand counts the number of $M \in \text{PM}_{2n}$ such that $d(Mx,x)=4d$ (as discussed in Section \ref{sec:comb}, $d(Mx,x)$ is a multiple of 4 for all $M$). This sum can be upper bounded by estimating
\[ \|\mathcal{D}_0^{\{x\}} - U\|_1 \le \frac{2}{\binom{2n}{n}} 2^{n/16} \binom{n}{n/2} \sum_{d<n/32} \binom{n}{2d} \le \frac{2}{\binom{n}{n/2}} 2^{n/16} \sum_{d<n/32} \binom{n}{2d} \le 2\sqrt{n}\,2^{n(1/16+H(1/16)-1)}, \]
where we use Lemma \ref{lem:binom2} in the penultimate inequality; in the last inequality, $H$ is the binary entropy function $H(x) = -x \log_2 x - (1-x) \log_2(1-x)$, and we use the easy inequality $\binom{n}{n/2} \ge 2^n/\sqrt{n}$. It can be verified that this quantity is exponentially small in $n$. Thus, for any $A$, we have
\[ \|\mathcal{D}_0^A - U\|_1 \le \frac{1}{|A|} \sum_{x \in A} \|\mathcal{D}_0^{\{x\}} - U\|_1 = 2^{-\Omega(n)}. \]
The case of $\mathcal{D}_1$ is more challenging. Let $p_M$ denote the probability under $\mathcal{D}_1$ that Bob gets input $M$, given that Alice's bit string was in $A$. Then
\beas
p_M &=& \Pr_{\mathcal{D}_1}[\text{Bob gets $M\,|$ Alice got something in $A$}]\\
&=& \frac{1}{|A|}\sum_{x \in A} \frac{[Mx=x]}{|\{N \in \text{PM}_{2n}:Nx=x\}|}
= \frac{\binom{2n}{n}}{N_{2n} \binom{n}{n/2} |A|}\sum_{x \in A} [Mx=x]\\
&=& \frac{\binom{2n}{n}}{N_{2n} \binom{n}{n/2}} \Pr_{x \in A} [Mx=x].
\eeas
We now attempt to upper bound $\|\mathcal{D}_1^A - U\|_1$ by appealing to the $\ell_2$ norm, using the simple inequality $\sum_{i=1}^n |x_i| \le \sqrt{n} \sqrt{ \sum_{i=1}^n x_i^2 }$ :
\[ \|\mathcal{D}_1^A - U\|_1 = \sum_{M \in \text{PM}_{2n}} \left| p_M - \frac{1}{N_{2n}} \right| \le \sqrt{N_{2n}} \sqrt{\sum_{M \in \text{PM}_{2n}} \left(p_M - \frac{1}{N_{2n}}\right)^2} = \sqrt{N_{2n} \sum_{M \in \text{PM}_{2n}} p_M^2 - 1}. \]
The interesting quantity under the square root is
\beas
N_{2n} \sum_{M \in \text{PM}_{2n}} p_M^2 &=& \frac{\binom{2n}{n}^2}{N_{2n} \binom{n}{n/2}^2} \sum_{M \in \text{PM}_{2n}} \Pr_{x \in A} [Mx=x]^2 \\
&=& \frac{\binom{2n}{n}^2}{N_{2n} \binom{n}{n/2}^2 |A|^2}\sum_{M \in \text{PM}_{2n}} \sum_{\substack{x,y \in \{0,1\}^{2n},\\|x|=|y|=n}} f(x)f(y)[Mx=x][My=y] \\
&=& \frac{\binom{2n}{n}^2}{N_{2n} \binom{n}{n/2}^2 |A|^2}\left(\sum_{x,y} f(x)f(y) \sum_{M \in \text{PM}_{2n}} [Mx=x,My=y] \right),
\eeas
where we define $f:\{0,1\}^{2n} \rightarrow \{0,1\}$ to be the characteristic function of $A$. Recalling that $\sum_{M \in \text{PM}_{2n}} [Mx=x,My=y]$ depends only on $d(x,y)$, we want to upper bound a quantity of the form
\[ \sum_{x,y} f(x) f(y) g(d(x,y)), \]
where, for $z$ a multiple of 4, $0 \le z \le 2n$, by eqn.\ (\ref{eqn:comb})
\[ g(z) = N_{z/2}^2\,N_{n-z/2}^2 = \left( \frac{(z/2)!}{2^{z/4} (z/4)!} \frac{(n-z/2)!}{2^{n/2-z/4} (n/2-z/4)!} \right)^2 = \frac{1}{2^n} \left(\frac{n!}{(n/2)!} \frac{\binom{n/2}{z/4}}{\binom{n}{z/2}} \right)^2, \]
and $g(z) = 0$ elsewhere. To find such a bound, it is convenient to use the Fourier expansion. Defining $h:\{0,1\}^{2n} \rightarrow \R$ by $h(x) = g(|x|)$, by eqn.\ (\ref{eq:fourier}) upper bounding $N_{2n} \sum_{M \in \text{PM}_{2n}} p_M^2$ is equivalent to proving an upper bound on
\[ \frac{\binom{2n}{n}^2}{N_{2n} \binom{n}{n/2}^2 |A|^2} 2^{4n} \sum_{S \subseteq [2n]} \widehat{h}(S) \hat{f}(S)^2. \]

We calculate
\[ \widehat{h}(S) = \frac{1}{2^{2n}} \sum_{k=0}^{2n} K_k^{2n}(|S|) g(k) = \frac{(n!)^2}{(n/2)!^2\,2^{3n}} \sum_{k=0}^{n/2} K_{4k}^{2n}(|S|) \frac{\binom{n/2}{k}^2}{\binom{n}{2k}^2} = \frac{n!}{2^{3n}} \binom{n}{n/2} \sum_{k=0}^{n/2} K_{4k}^{2n}(|S|) \frac{\binom{n/2}{k}^2}{\binom{n}{2k}^2}. \]
This implies that the quantity we would like to upper bound is
\[ N_{2n} \sum_{M \in \text{PM}_{2n}} p_M^2 = \frac{\binom{2n}{n} 2^{2n}}{\binom{n}{n/2} |A|^2} \sum_{S \subseteq [2n]} \sum_{k=0}^{n/2} K_{4k}^{2n}(|S|) \frac{\binom{n/2}{k}^2}{\binom{n}{2k}^2} \hat{f}(S)^2.\]
The first thing to note about this sum is that the terms with $|S|$ odd don't contribute anything; in fact, for odd $x$ it holds that
\[ \sum_{k=0}^{n/2} K_{4k}^{2n}(x) \frac{\binom{n/2}{k}^2}{\binom{n}{2k}^2} = 0. \]
To see this, recall that $K_{4k}^{2n}(x) = (-1)^x K_{2n-4k}^{2n}(x)$, which means that all the terms in this sum cancel out, except when $n$ is a multiple of 4 and $k=n/4$. But $K_n^{2n}(x) = 0$ when $x$ is odd. We are therefore left with the quantity
\be
\label{eq:alphabound}
2^{2(\alpha - n)} \frac{\binom{2n}{n}}{\binom{n}{n/2}} \sum_{s=0}^n \sum_{k=0}^{n/2} K_{4k}^{2n}(2s) \frac{\binom{n/2}{k}^2}{\binom{n}{2k}^2} W_{2s}(f),
\ee
where we set $2^{\alpha} = \frac{2^{2n}}{|A|}$. We first turn to finding an upper bound on the inner sum
\[ \sum_{k=0}^{n/2} K_{4k}^{2n}(2s) \frac{\binom{n/2}{k}^2}{\binom{n}{2k}^2}. \]
We rewrite this using the symmetry relation for Krawtchouk polynomials to get
\[ \sum_{k=0}^{n/2} K_{4k}^{2n}(2s) \frac{\binom{n/2}{k}^2}{\binom{n}{2k}^2} = \frac{1}{\binom{2n}{2s}} \sum_{k=0}^{n/2} K_{2s}^{2n}(4k) \binom{2n}{4k} \frac{\binom{n/2}{k}^2}{\binom{n}{2k}^2} \]
and apply Cauchy-Schwarz to give
\bea
\nonumber \sum_{k=0}^{n/2} K_{4k}^{2n}(2s) \frac{\binom{n/2}{k}^2}{\binom{n}{2k}^2} &\le&  \frac{1}{\binom{2n}{2s}} \left(\sum_{k=0}^{n/2} K_{2s}^{2n}(4k)^2 \binom{2n}{4k} \right)^{1/2} \left(\sum_{k=0}^{n/2} \frac{\binom{2n}{4k} \binom{n/2}{k}^4}{\binom{n}{2k}^4} \right)^{1/2}\\
\nonumber &=& \frac{2^n}{\binom{2n}{2s}^{1/2}} \left(\sum_{k=0}^{n/2} \frac{\binom{2n}{4k} \binom{n/2}{k}^4}{\binom{n}{2k}^4} \right)^{1/2} \le \frac{2^{3n/2}\binom{n}{n/2}^{1/2}}{\binom{2n}{2s}^{1/2}\binom{2n}{n}^{1/2}} \left(\sum_{k=0}^{n/2} \frac{\binom{n/2}{k}^2}{\binom{n}{2k}} \right)^{1/2}\\
\label{eq:kbound} &=& \frac{2^{2n}}{\binom{2n}{2s}^{1/2}\binom{2n}{n}^{1/2}},
\eea
where the first equality is the orthogonality relation for Krawtchouk polynomials, the second inequality is Lemma \ref{lem:binom}, and the second equality is Lemma \ref{lem:binom3}.

We will use this inequality to bound the overall sum (\ref{eq:alphabound}); however, in order to obtain a stronger upper bound, we start by treating the cases $s=0$ and $s=1$ separately. For $s=0$, we have
\[ 2^{2(\alpha - n)} \frac{\binom{2n}{n}}{\binom{n}{n/2}} \sum_{k=0}^{n/2} K_{4k}^{2n}(0) \frac{\binom{n/2}{k}^2}{\binom{n}{2k}^2} W_0(f) = 2^{2(\alpha - n)} \frac{\binom{2n}{n}}{\binom{n}{n/2}} \sum_{k=0}^{n/2} \binom{2n}{4k} \frac{\binom{n/2}{k}^2}{\binom{n}{2k}^2} 2^{-2 \alpha} \le \frac{1}{2^n} \sum_{k=0}^{n/2} \binom{n}{2k} = \frac{1}{2}, \]
where the inequality is Lemma \ref{lem:binom}. In the case $s=1$,
\[ 2^{2(\alpha - n)} \frac{\binom{2n}{n}}{\binom{n}{n/2}} \sum_{k=0}^{n/2} K_{4k}^{2n}(2) \frac{\binom{n/2}{k}^2}{\binom{n}{2k}^2} W_2(f) \le  \frac{((e \ln 2) \alpha)^2 \binom{2n}{n}}{2n(2n-1) 2^{2n} \binom{n}{n/2}} \sum_{k=0}^{n/2} \binom{2n}{4k} \left( (2n-8k)^2 - 2n \right) \frac{\binom{n/2}{k}^2}{\binom{n}{2k}^2}, \]
by Lemma \ref{lem:uberkkl} and the expression for the Krawtchouk polynomial worked out in Section \ref{sec:krawtchouk}. Dropping the negative term and using Lemma \ref{lem:binom}, we obtain an upper bound of
\[ \frac{2 ((e \ln 2) \alpha)^2}{n(2n-1) 2^n} \sum_{k=0}^{n/2} (n-4k)^2 \binom{n}{2k} = \frac{2 ((e \ln 2) \alpha)^2}{n(2n-1)} \left(\frac{n^2}{2} - n^2 + \frac{n(n+1)}{2} \right), \]
where we use Lemma \ref{lem:genfn} to evaluate the sum. We therefore have
\[ 2^{2(\alpha - n)} \frac{\binom{2n}{n}}{\binom{n}{n/2}} \sum_{k=0}^{n/2} K_{4k}^{2n}(2) \frac{\binom{n/2}{k}^2}{\binom{n}{2k}^2} W_2(f) \le \frac{((e \ln 2) \alpha)^2}{2n-1}. \]
We now bound the rest of the sum (\ref{eq:alphabound}). First, note that these inequalities for $s=0$, $s=1$ also apply to $s=n$, $s=n-1$ by symmetry considerations (the Krawtchouk polynomials are symmetric about $s=n/2$, as is Lemma \ref{lem:uberkkl}). We now apply inequality (\ref{eq:kbound}) to all of the other terms in the sum (\ref{eq:alphabound}) to obtain an upper bound of
\[ 1 + \frac{2((e \ln 2) \alpha)^2}{2n-1} + \frac{\binom{2n}{n}^{1/2}}{\binom{n}{n/2}} 2^{2\alpha} \sum_{s=2}^{n-2} \frac{W_{2s}(f)}{\binom{2n}{2s}^{1/2}}. \]
We are now ready to apply Lemma \ref{lem:uberkkl} to the terms in this sum such that either $s$ or $n-s$ is upper bounded by $\frac{1}{2}(\ln 2) \alpha$. This gives an upper bound on expression (\ref{eq:alphabound}) of
\[ 1 + \frac{2((e \ln 2) \alpha)^2}{2n-1} + \frac{\binom{2n}{n}^{1/2}}{\binom{n}{n/2}}\left( 2 \sum_{s=2}^{\lfloor \frac{1}{2}(\ln 2) \alpha \rfloor} \frac{1}{\binom{2n}{2s}^{1/2}} \left(\frac{(2e \ln 2)\alpha}{2s}\right)^{2s} + 2^{2 \alpha} \sum_{s=\lceil \frac{1}{2}(\ln 2) \alpha \rceil}^{n-\lceil \frac{1}{2}(\ln 2) \alpha \rceil} \frac{W_{2s}(f)}{\binom{2n}{2s}^{1/2}} \right), \]
where the factor of 2 in the first sum follows from using the symmetry about $n/2$ of Lemma \ref{lem:uberkkl}. We find a bound for the sum over small values of $s$ by simply lower bounding the binomial coefficients via the inequality $\binom{n}{s} \ge \left(\frac{n}{s}\right)^s$, which is valid for all $s \ge 1$:
\[ \sum_{s=2}^{\lfloor \frac{1}{2}(\ln 2) \alpha \rfloor} \frac{1}{\binom{2n}{2s}^{1/2}} \left(\frac{(2e \ln 2)\alpha}{2s}\right)^{2s} \le \sum_{s=2}^\infty \left(\frac{s}{n} \right)^s \left(\frac{(e \ln 2)\alpha}{s}\right)^{2s} = \sum_{s=2}^\infty \left(\frac{(e \ln 2)\alpha}{\sqrt{sn}}\right)^{2s}. \]
Now, for any $\alpha \le \frac{n^{7/16}}{e \ln 2}$, we have
\[ \sum_{s=2}^\infty \left(\frac{(e \ln 2)\alpha}{\sqrt{sn}}\right)^{2s} \le \frac{1}{4} \sum_{s=2}^\infty n^{-s/8} = \frac{1}{4 n^{1/8}(n^{1/8}-1)}. \]
Finally, the sum over large values of $s$ can be trivially upper bounded by noting that $\sum_{s=0}^{n} W_{2s}(f) \le 1$, and hence
\[ \sum_{s=\lceil \frac{1}{2}(\ln 2) \alpha \rceil}^{n-\lceil \frac{1}{2}(\ln 2) \alpha \rceil} \frac{W_{2s}(f)}{\binom{2n}{2s}^{1/2}} \le \frac{1}{\binom{2n}{\lfloor (\ln 2) \alpha \rfloor}^{1/2}} \le \left( \frac{2n}{\lfloor (\ln 2) \alpha \rfloor} \right)^{-\lfloor (\ln 2) \alpha \rfloor/2} = n^{-\Omega(n^{7/16})}; \]
note that here we assume that $\alpha = \Omega(n^{7/16})$ without loss of generality. By Stirling's approximation, $\frac{\binom{2n}{n}^{1/2}}{\binom{n}{n/2}} = \frac{\pi^{1/4}}{\sqrt{2}}n^{1/4} + o(1)$. The overall bound on $N_{2n} \sum_{M \in \text{PM}_{2n}} p_M^2$ thus becomes, for any $\alpha \le \frac{n^{7/16}}{e \ln 2}$,
\[ N_{2n} \sum_{M \in \text{PM}_{2n}} p_M^2 \le 1 + O(n^{-1/8}) + \left(\frac{\pi^{1/4}}{\sqrt{2}} n^{1/4} + o(1) \right) \left( \frac{1}{2 n^{1/8}(n^{1/8}-1)} + n^{-\Omega(n^{7/16})}\right), \]
implying
\[ \|\mathcal{D}_1^A - U \|_1 \le \left(N_{2n} \sum_{M \in \text{PM}_{2n}} p_M^2-1\right)^{1/2} \le \left( \frac{\pi^{1/4}}{2\sqrt{2}} + o(1) \right)^{1/2} \le  0.69 + o(1). \]
Going back to the original inequality we were trying to prove, we have shown that, for all $A$ such that
$|A| \ge 2^{2n-\frac{n^{7/16}}{e \ln 2}}$,
it holds that
\[ \|\mathcal{D}_0^A - \mathcal{D}_1^A \|_1 \le \|\mathcal{D}_0^A - U \|_1 + \|\mathcal{D}_1^A - U \|_1 \le 2^{-\Omega(n)} + \left( \frac{\pi^{1/4}}{2\sqrt{2}} + o(1) \right)^{1/2} \le 0.69 + o(1). \]
Alice's input is of length $m = \log_2 \binom{2n}{n} = 2n - O(\log n)$ bits. Thus, taking $c = \frac{n^{7/16}}{e \ln 2} - O(\log n)$ and $\epsilon = 0.21 - o(1)$ in Lemma \ref{lem:commbound}, this implies that any classical one-way protocol that transmits at most $\frac{n^{7/16}}{e \ln 2} - O(\log n)$ bits from Alice to Bob cannot compute {\sc PM-Invariance} with success probability greater than $0.79 + o(1)$. This completes the proof of Theorem \ref{thm:main}.


\section{Conclusions}
\label{sec:conc}

We have given an example of a natural problem for which there is an exponential separation between one-way quantum and classical communication complexity. However, the problem of determining whether such a separation -- or indeed {\em any} asymptotic separation -- can exist for a total function still remains.

We conjecture that the lower bound of $\Omega(n^{7/16})$ for the classical communication complexity of {\sc PM-Invariance} is not tight, and the true lower bound is $\Omega(n^{1/2})$, matching the upper bound. It appears that it would suffice to obtain stronger upper bounds on the Krawtchouk polynomials to prove such a result.

Finally, we mention an intriguing connection to coding theory. Alice's subset $A \subseteq \{0,1\}^{2n}$ can be thought of as a code\footnote{In the sense that any set of bit strings corresponds to a code; there is no constraint on the minimum distance of this code.}, in which case the quantity we upper bound is an exponentially decreasing function of the distance distribution of that code. There are a number of works that put constraints on distance distributions, and in particular show that for ``large enough'' codes, the distance distribution must be similar to that of a random code (e.g.\ see \cite{krasikov97,ashikhmin05}). Tight enough results of this form might suffice to prove our main result (as it is easy to verify that it holds for a random subset $A$). Indeed, in \cite{linial02}, Linial and Samorodnitsky show that for large {\em linear} codes, the number of bit strings at distance $n$ is (asymptotically) maximised by random codes. Note that this result does not depend on any information about the code's minimum distance. Extending the result of \cite{linial02} in a suitable way to non-linear codes (i.e.\ general subsets of $\{0,1\}^{2n}$) and other distances might allow a tight lower bound on the communication complexity of {\sc PM-Invariance} to be proven. Conversely, our result might have implications for the understanding of distance distributions of general codes.


\section*{Acknowledgements}

This work was supported by an EPSRC Postdoctoral Research Fellowship. I would like to thank Aram Harrow and Rapha\"el Clifford for helpful discussions and suggestions, and Ronald de Wolf and Oded Regev for helpful comments on a previous version. I would also like to thank two anonymous referees for comments which improved the paper.



\end{document}